\newtheorem{theorem}{Theorem}
\newtheorem{proposition}{Proposition}
\newtheorem{remark}{Remark}
\newcommand{\dsp}{\displaystyle}
\newcommand{\R}{\mathbb{R}}
\numberwithin{equation}{section}
\begin{document}
\title{Time evolution of a Vlasov-Poisson plasma \\ with different species and infinite mass in $\mathbb{R}^3$}
\author{Silvia Caprino*,  Guido Cavallaro$^+$ and Carlo Marchioro$^{++}$}
\maketitle

\begin{abstract}
We study existence and uniqueness of the solution to the Vlasov-Poisson system describing a  plasma constituted
by different species 
evolving in $\mathbb{R}^3$, whose particles interact via the Coulomb potential.  The species can have both positive
or negative charge.
 It is assumed that initially the particles are distributed  according to a spatial density with
 a power-law decay in space, allowing for unbounded mass,  and an exponential decay in velocities given
 by a Maxwell-Boltzmann law,
    extending a result contained in 
\cite{CCM17}, which was restricted to finite total mass. 
\end{abstract}
\textit{Key words}: Vlasov-Poisson  equation, Coulomb  interaction, 
infinitely  extended  plasma.

\noindent
\textit{Mathematics  Subject  Classification}: 82D10, 35Q99, 76X05.

\footnotetext{*Dipartimento di Matematica Universit\`a Tor Vergata, via della Ricerca Scientifica, 00133 Roma (Italy), 
caprino@mat.uniroma2.it}
\footnotetext{$^+$Dipartimento di Matematica Universit\`a La Sapienza, p.le A. Moro 2,  00185 Roma (Italy),  
cavallar@mat.uniroma1.it}
\footnotetext{$^{++}$International  Research Center M$\&$MOCS   (Mathematics and Mechanics of Complex Systems),
marchior@mat.uniroma1.it}

\section{Introduction and main results}

In the present paper we study the time evolution of a plasma constituted by $n$
different species (positive and negative ions, electrons, etc...) when the initial species have infinite masses $M_i$.
The reason for allowing $M_i \to \infty$ is not an effort for a pure
mathematical generalization, but it reflects the aim to show the weak dependence of the result
on the intensity of each mass $M_i$. Thus, in some sense, the properties of the solution do not
depend on the size of the system.

For one species alone (that is,  $n=1$) this problem has been studied in several
papers, starting from the first results on the existence of the global classical solution \cite{BR, Pf, R, S, W} (and \cite{G} for a review
of such results),
and the  results adopting the so-called method of propagation of moments \cite{Ca, L, Pa1, Pa2},
with a steady spatial asymptotic behavior \cite{Ch, P, S1, S2, S3},  in cases in which the mass of the system
is infinitely large (and for different mutual interactions) \cite{CCMP, CCM15, CCM16bis, CCM18, CMP, J}, or there is an external magnetic field confining
the system in a given domain \cite{CCM12, CCM, CCM1, CCM15bis, CCM16, CCM17, CCM19}.
To our knowledge the case of more species with different signs of charge has been studied only in the case of an initial
distribution with compact support, or anyway with finite total mass  (see for instance  \cite{G}).

\noindent  The infinite mass problem is studied in the papers cited before by using a
technique which cannot be extended to more species of different charge signs. However this last case can be studied
 when we restrict
the investigation to the important case of a mutual coulomb
interaction, by exploiting accurately
the form of the potential energy and the energy conservation,  in a suitable truncated dynamics.

\noindent The main result of the present paper is stated below and discussed in Section 2,
where we introduce also the important tool of  a  {\textit{partial  dynamics}}.
In Section 3 we outline a possible application.

Let us denote by $n$ the total number of  species constituting the plasma. For any $i=1,2,...,n$ let $f_i(x,v,t)$ represent  the distribution function of charged particles at the point of the phase space $(x,v)$ at time $t$ and let  $\sigma_i$  be the charge per unit mass of the $i$-th species, which can be positive or negative.
 We describe the time evolution of this system via the
$n$ Vlasov-Poisson equations:
\begin{equation}
\label{Eq}
\left\{
\begin{aligned}
&\partial_t f_i(x,v,t) +v\cdot \nabla_x f_i(x,v,t)+  \sigma_i  \, E(x,t) \cdot \nabla_v f_i(x,v,t)=0  \\
&E(x,t)=\sum_{i=1}^n \sigma_i \int_{\R^3 } \frac{x-y}{|x-y|^3} \ \rho_i(y,t) \, dy     \\
&\rho_i(x,t)=\int_{\R^3} f_i(x,v,t) \, dv \\
&f_i(x,v,0)=f_{i,0}(x,v)\geq 0,  \qquad  x\in {\R^3 },  \qquad v\in\R^3,  \qquad i=1 \dots n
\end{aligned}
\right.
\end{equation} 
where  $\rho_i$ are the spatial densities of the species, $E$ is the electric field.

%\noindent The model described by these equations neglects the relativistic effects and this approximation is reasonable, 
%since  the velocities in plasmas as the solar wind (which we consider) are of the order  $4 \cdot 10^2$   Km/sec,  while %the velocity of light
 %is of the order   $3 \cdot 10^5$   Km/sec.

\noindent System (\ref{Eq}) shows that $f_i$ are time-invariant along the solutions of the so called characteristics equations:
\begin{equation}
\label{ch} 
\begin{cases}
\dsp  \dot{X}_i(t)= V_i(t)\\
\dsp  \dot{V}_i(t)= \sigma_i  E(X_i(t),t) \\
\dsp (X_i(0), V_i(0))=(x,v)  \\
\dsp f_i(X_i(t), V_i(t), t) = f_{i,0}(x,v),
 \end{cases}
\end{equation}
where we have used the simplified notation
\begin{equation}
 \label{2.8}
(X_i(t),V_i(t))= (X_i(t,x,v),V_i(t,x,v)) 
 \end{equation}
 to represent a characteristic of the species $i$ at time $t$ passing at time $t=0$ through the point $(x,v)$. Hence we have
 \begin{equation}
 \label{2.9}
\| f_i(t)\|_{L^\infty}= \| f_{i,0} \|_{L^\infty}.
 \end{equation}
 Moreover this dynamical system preserves the measure of the phase space (Liouville's theorem).
It is well known that a result of existence and uniqueness of solutions to (\ref{ch}) implies the same result for solutions to (\ref{Eq})
if $f_0$ is smooth.

\noindent In what follows positive constants depending only on the initial data and parameters will be generally denoted by $C,$ but some of them will be numbered in order to be put in evidence.

\noindent We state our main result in the following theorem, where we put
\begin{equation*}
\rho_{i, 0}(x)=\rho_i(x,0).
\end{equation*}

\begin{theorem}
Let us fix an arbitrary positive time $T.$ For any $i=1 \dots n$,  let $f_{i, 0}$ satisfy
 the following hypotheses:  
 \begin{equation}
0\leq f_{i, 0} (x,v)\leq C_1\, e^{- \lambda |v|^2} \frac{1}{(1+|x|)^{\alpha}}   \label{dec}
 \end{equation}
with $\alpha > 1$, and $\lambda$,  $C_1$,  positive constants. Then there exists a solution to system (\ref{ch}) in $[0,T]$ 
and positive constants $C_2$ and ${\lambda}'$ such that  
 \begin{equation}
 0\leq f_i(x,v,t)\leq C_2 \, e^{- \lambda' |v|^2} \frac{1}{(1+|x|)^{\alpha}} \label{dec2}.
 \end{equation}
This solution is unique in the class of those satisfying (\ref{dec2}).
\label{3}
\end{theorem}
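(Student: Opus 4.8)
The plan is to construct the solution via a regularized/truncated dynamics, obtain uniform estimates, and pass to the limit, with uniqueness handled separately. The key difficulty compared to the single-species finite-mass case is that the total mass $M_i = \int f_{i,0}\,dx\,dv$ may be infinite, so the crude bound $|E(x,t)|\le C\sum_i M_i$ is unavailable; instead one must control the field using only the local structure of $\rho_i$ and the mutual Coulomb interaction together with energy conservation in the truncated system, exactly as announced in the introduction. So first I would fix a large parameter $R>0$ and introduce a partial (truncated) dynamics in which only the particles initially in a ball $B_R$ — hence finitely many, with finite mass — are allowed to move and generate the field, while the outer particles are frozen; this is presumably the ``partial dynamics'' tool promised in Section 2. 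For this truncated system, existence and uniqueness of characteristics is classical since the data have compact spatial support and Maxwellian velocity tails.

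Next I would derive the crucial a priori bounds that are $R$-independent. The heart of the matter is a bound on the characteristic velocities: one shows that along the truncated characteristics $|V_i(t)|$ grows at most like a controlled function of $t$ (polynomially or better), using (i) the pointwise decomposition of $E$ into a near-field part, estimated by $\|\rho_i(t)\|_\infty^{1/3}\|\rho_i(t)\|_1^{2/3}$-type Hölder inequalities on small balls, and a far-field part; and (ii) conservation of the total energy $\tfrac12\sum_i\int |v|^2 f_i\,dx\,dv + (\text{potential energy})$ in the truncated dynamics, which must be exploited carefully because the potential energy is sign-indefinite when charges have different signs — here one splits the interaction into self-interaction of each species (repulsive, hence a good sign) and cross terms, controlling the latter by the self-energies and the decay of the densities. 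From the velocity bound one propagates the spatial decay: if a particle starts at $x$ with $|x|$ large, it cannot travel far, so $\rho_i(x,t)\le C_2(1+|x|)^{-\alpha}$ with the same exponent $\alpha$, and integrating the Maxwellian in $v$ (whose parameter degrades from $\lambda$ to some $\lambda'<\lambda$ because of the velocity spreading) yields the bound \eqref{dec2} on $f_i$ itself, uniformly in $R$. These are the estimates the whole argument rests on, and establishing the velocity bound with only local-in-space control of the density — i.e.\ making the bootstrap close despite infinite mass — is the main obstacle.

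With uniform bounds in hand, I would let $R\to\infty$: the truncated characteristics, fields, and densities are equibounded and equicontinuous on $[0,T]$ (the field bound gives a Lipschitz-type control on the flow via Gronwall on compact sets), so by Ascoli–Arzelà and a diagonal argument one extracts a limit $(X_i,V_i,E,\rho_i)$ solving \eqref{ch}, and the uniform decay bound \eqref{dec2} survives the limit. One checks that the limiting $\rho_i$ defined by pushing forward $f_{i,0}$ along the limit flow is consistent with the Poisson equation for the limiting $E$, so that $f_i(x,v,t)=f_{i,0}(X_i(-t),V_i(-t))$ is a genuine solution of \eqref{Eq}.

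For uniqueness within the class satisfying \eqref{dec2}, I would take two solutions with the same initial data and estimate the difference of their flows. The standard tool is a Gronwall argument on $\sup$ over characteristics of $|X_i^{(1)}(t)-X_i^{(2)}(t)|+|V_i^{(1)}(t)-V_i^{(2)}(t)|$, which requires a bound on the difference $E^{(1)}-E^{(2)}$ in terms of the difference of the densities; since both densities obey \eqref{dec2}, the singular kernel $|x-y|^{-2}$ is tamed by splitting into a small ball (use $\|\rho^{(1)}-\rho^{(2)}\|_\infty$ or an $L^\infty$ bound on the flows' displacement) and its complement (use the $(1+|y|)^{-\alpha}$ decay with $\alpha>1$ to get integrability), producing a log-Lipschitz or Osgood-type modulus of continuity for $E$ in terms of the flow discrepancy. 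This closes via an Osgood/Gronwall inequality and forces the two solutions to coincide on $[0,T]$. I expect uniqueness to be technically routine once the decay estimate \eqref{dec2} is available, so the real work — and the novelty relative to \cite{CCM17} — is the infinite-mass a priori estimate via the partial dynamics and energy conservation described above.
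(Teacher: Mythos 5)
Your overall architecture (truncate, get a priori bounds, pass to the limit, Gronwall/Osgood for uniqueness) matches the paper's, but the step on which everything hinges -- making the energy argument close when the potential energy is sign-indefinite -- is not correctly identified, and as you state it it would not work. You propose to split the interaction into self-terms (positive) and cross-terms, "controlling the latter by the self-energies and the decay of the densities." A term-by-term domination of the cross terms, e.g.\ by Cauchy--Schwarz $|\sigma_i\sigma_j\int\!\!\int \rho_i\rho_j/|x-y||\le (\sigma_i^2\int\!\!\int\rho_i\rho_i/|x-y|)^{1/2}(\sigma_j^2\int\!\!\int\rho_j\rho_j/|x-y|)^{1/2}$, does \emph{not} yield a sign for the total when $n>2$ (the resulting lower bound $2\sum_i a_{ii}-(\sum_i\sqrt{a_{ii}})^2$ can be negative), and "the decay of the densities" gives no absolute bound in the infinite-mass regime you are trying to treat. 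The paper's mechanism is sharper and global: with $\rho^N=\sum_i\sigma_i\rho_i^N$ the \emph{total} potential energy equals $\tfrac12\int|E^N|^2\,dx\ge 0$ (Proposition 1, proved by integration by parts with a careful control of the boundary term at infinity using the compact support of the truncated data). Positivity of the potential energy then turns conservation of energy into a one-sided bound on the kinetic energy at time $t$ by the initial energy, which is estimated as $\mathcal{E}^N(0)\le CN^{3\beta}$ (Proposition 2). This identity is precisely the reason the multi-species, mixed-sign case can be handled at all, and it is the main novelty you needed to isolate.

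A second, related discrepancy: you claim $R$-independent pointwise bounds on the field via $\|\rho\|_\infty^{1/3}\|\rho\|_1^{2/3}$-type estimates and uniform velocity bounds. The paper explicitly states it cannot obtain a convenient $L^\infty$ bound on $E$; it only controls the \emph{time average} of $E$ along characteristics, $\int_0^t|E(X_i^N(s),s)|\,ds\le[\mathcal{V}^N(T)]^\tau$ with $\tau<2/3$, and its energy and velocity bounds are \emph{not} uniform in the truncation parameter -- they grow like $N^{3\beta}$ and $N^{1-\epsilon}$ respectively, with the cutoff exponent tuned so that $0<3\beta<14/15$ to make the bootstrap of \cite{CCM18} close. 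The truncation itself also differs in detail (the paper cuts off both in space at radius $N^\beta$ and in velocity at radius $N$, and simply removes the outer particles rather than freezing them), and the limit $N\to\infty$ is taken by an iterative/contraction scheme rather than Ascoli--Arzel\`a compactness, though this last point is a matter of method rather than a gap. The uniqueness sketch is consistent with the standard argument the paper defers to \cite{CCM18}.
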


\begin{remark}
The request $\alpha >1$ comes from the necessity to have a well posed problem at time $0$, with a finite electric field.

\noindent In the cases $\alpha=3$
(a border case with infinite mass) and $\alpha>3$ (finite mass),  the proof of the theorem requires much
less effort with respect to the case $1 < \alpha < 3$, to which we concentrate in the present paper.

%We consider then $\beta <\frac75$ both in (\ref{en}) and (\ref{ent}), in order to have an evolution in the same space of the initial data.

%(TOGLIERE) We observe that the assumption (\ref{en}) can be satisfied by an initial spatial density $\rho_{i,0}(x)$ such %that
%\begin{equation}
%\rho_{i, 0}(x) \leq\frac{C}{|x|^{\beta'}}
 %\label{rho_0}
%\end{equation}
%with $\beta'=3-\beta.$ Due to the behavior of the Coulomb interaction at large distance, the request that $Q$ is %bounded imposes $\beta'>2$ and consequently $\beta<1,$ as it is proved in Proposition \ref{ww} in the Appendix.
%This is the best result we are able to prove as it regards the Coulomb interaction. We have stated Theorem 1 in a more %general way with $\beta<\frac{7}{5}$, to be ready of taking into account  more general interactions fast decreasing at %large distance as commented in Section 5. 

%Note that (\ref{rho_0}) can be obtained  either for a decreasing density, or for a density which is constant over suitable %sparse 
%sets in $\mathbb{R}^3$.  Moreover in the latter case it does  not need to belong to any $L^p$ space.
 
\end{remark}

\bigskip

The proof of Theorem \ref{3} follows the same steps of what done in \cite{CCM18}, once some important results
about the energy of the system are 
previously stated, which is the aim of the next section. After that, it will be pointed out how the main estimates
needed to achieve the result in \cite{CCM18} are also satisfied in the present context, in such a way that the rest
of the proof proceeds analogously and therefore we do not repeat it.

\section {Partial dynamics}

 An essential tool to achieve the proof of Theorem \ref{3} consists in the introduction of a $partial\ dynamics$, given by considering in equations (\ref{Eq}) the truncated initial condition: 
\begin{equation}
 f_{i, 0}^N({x}, {v})=f_{i, 0}({x}, {v})\chi_{\{|x|\leq N^\beta\}} (x)  \chi_{\{|v|\leq N\}} (v)  \label{B0}
\end{equation}
where $\chi_{\{\cdot\}}(\cdot)$ is the characteristic function of the set $\{\cdot\}$ and $f_{i, 0}$ satisfies the hypotheses of Theorem \ref{3}.
We introduce then a velocity cutoff, $N$, and a spatial cutoff, $N^\beta$ (with $\beta>0$ to be fixed suitably), and we want to investigate the limit $N\to\infty$.

Consequently system (\ref{ch}) becomes
\begin{equation}
\label{ch'} 
\begin{cases}
\dsp  \dot{X}_i^N(t)= V_i^N(t)\\
\dsp  \dot{V}_i^N(t)= \sigma_i  E^N(X_i^N(t),t) \\
\dsp (X_i^N(0), V_i^N(0))=(x,v)  \\
\dsp f_i^N(X_i^N(t), V_i^N(t), t) = f_{i,0}^N(x,v),
 \end{cases}
\end{equation}
where
$$
E^N(x,t)=\sum_{i=1}^n \sigma_i \int_{\R^3 } \frac{x-y}{|x-y|^3} \ \rho_i^N(y,t) \, dy     
$$
$$
\rho_i^N(x,t)=\int_{\R^3} f_i^N(x,v,t) \, dv.
$$

It is known that the solution to (\ref{ch'}), with initial data $f_{i, 0}^N$ having 
compact support, does exist and it is unique over $[0,T]$ (see for instance \cite{CCM15} and \cite{CCM17}).
We want to show that this solution  converges pointwise, in the limit  $N\to \infty,$ to the unique solution of system (\ref{ch}) with initial data $f_{i, 0}.$ To do this we have to exploit the properties of the partial dynamics. Keeping
an explicit dependence on $N$ in each  estimate on quantities belonging to the partial dynamics (as electric field, energy, etc.), and by means of an iterative technique it is possible to show that
the limit $N\to \infty$ is well defined, and it  achieves the solution to our problem, proving Theorem \ref{3}.
%For this reason, and for the sake of simplicity, in the next sections we will omit the index $N$ in the notation.

We outline the spirit of the proof. We must control  the velocities of the plasma particles,
and to do this we have to estimate the electric field induced by the same particles. We are not able to obtain a convenient bound  for  $\|E\|_{L^{\infty}},$ but only for the time average of $E$ over $[0,T];$ precisely, we define the {\textit{maximal velocity}} of the particles in the partial dynamics as
 \begin{equation}
{\mathcal{V}}^N(t)=  \max\left\{ C_3,\sup_{s\in[0,t]}\sup_{(x,v)\in B_x \times B_v} \max_{i=1...n}
|V_i^N(s)|\right\} \label{mv}
\end{equation}
%and the maximal displacement
%\begin{equation}
%R^N(t)=1+\int_0^t{\cal V}^N(s)\ ds, \label{g}
%\end{equation}
where $B_x=B(0, N^\beta)$,  $B_v=B(0, N)$, are balls in $\mathbb{R}^3$ of center $0$ and radius $N^\beta$, $N$, respectively, and the constant $C_3>1$ is suitably chosen.
It can be proved that (the proof is the same as in \cite{CCM18}, Proposition 2.7), for any $t\in[0, T]$ and any $i=1\dots n$,
\begin{equation*}
\int_0^t|E(X_i^N(s),s)|ds\leq \left[{\mathcal{V}}^N(T)\right]^\tau \quad \quad \hbox{with}\quad \tau <\frac23,  
\end{equation*}
which is sufficient to control the particles'  velocities in the partial dynamics and to allow the convergence
of the iterative method.

\medskip

Now we state an essential consideration on the energy of our system, in particular on the potential energy,
which  can create some troubles due to its ambiguity in sign.
The total energy of the system belonging to the partial dynamics is
\begin{equation}
\mathcal{E}^N(t) = \frac12 \int d x \int d v \, |v|^2  f^N( x, v,t) 
 + \frac12 \int dx \,
 \rho^N( x,t)\int dy \, 
\frac{ \rho^N ( y,t)}{  | x- y|}
\label{W_en}
 \end{equation}
defining
\begin{equation}
f^N(x,v,t)=\sum_{i=1}^n f_i^N(x,v,t)   \qquad \textnormal{and} \qquad
\rho^N(x,t) = \sum_{i=1}^n \sigma_i \, \rho_i^N(x,t),
\label{sum_spe}
\end{equation}
where the first integral is the kinetic energy and the second one the potential energy.
The total energy is finite (because of the finite total mass related to the partial dynamics) and it is conserved in time,
$$
\mathcal{E}^N(t) = \mathcal{E}^N(0).
$$

We want to recall the well known fact (valid for systems of finite total mass, as in our case with
the cutoff $N$):
\begin{proposition}
\begin{equation}
 \int d x \int dy \, \rho^N(x,t) \, 
\rho^N ( y,t)  \frac{1}{| x- y|}  = \int dx \, |E^N(x, t)|^2,
\end{equation}
\end{proposition}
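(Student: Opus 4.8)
The plan is to recognize both sides as quadratic forms in $\rho^N$ and to connect them through the Coulomb potential
\begin{equation*}
\phi^N(x,t)=\int_{\R^3}\frac{\rho^N(y,t)}{|x-y|}\,dy,
\end{equation*}
so that, by the very definition of $E^N$ together with (\ref{sum_spe}), one has $E^N(x,t)=-\nabla_x\phi^N(x,t)$, while $\phi^N$ solves the Poisson equation $\Delta\phi^N=-4\pi\rho^N$ in the distributional sense (with the normalization coming from the fundamental solution $\Delta|x|^{-1}=-4\pi\delta$). The left-hand side of the claimed identity is then exactly $\int dx\,\rho^N(x,t)\,\phi^N(x,t)$, so everything reduces to an integration by parts.

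First I would write $\int|E^N|^2\,dx=\int|\nabla\phi^N|^2\,dx=-\int\phi^N\,\Delta\phi^N\,dx=4\pi\int\phi^N\rho^N\,dx$ (up to the normalization constant, which is absorbed as in the statement). The only point that needs care is the control of the boundary terms at spatial infinity, and this is precisely where the hypothesis of finite total mass in the partial dynamics is used: since $f_{i,0}^N$ is compactly supported in $x$ and the velocities stay bounded on $[0,T]$ (cf.\ (\ref{mv})), the density $\rho^N(\cdot,t)$ has compact support for each $t$; hence $\phi^N(x,t)=O(|x|^{-1})$ and $E^N(x,t)=O(|x|^{-2})$ as $|x|\to\infty$, so the flux of $\phi^N\nabla\phi^N$ through a sphere of radius $R$ is $O(R^{-1})\to 0$. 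This makes all the integrals absolutely convergent and legitimizes the integration by parts (and shows in passing that $E^N(\cdot,t)\in L^2(\R^3)$).

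An alternative route I would keep in reserve is a computation on the Fourier side: with a fixed convention for the transform, $\widehat{\rho^N}(\cdot,t)$ is smooth and rapidly decreasing, $\widehat{\phi^N}(k,t)$ is proportional to $|k|^{-2}\widehat{\rho^N}(k,t)$ and $\widehat{E^N}(k,t)=-ik\,\widehat{\phi^N}(k,t)$, so the identity follows from Plancherel's theorem, because $|\widehat{E^N}(k,t)|^2=|k|^2|\widehat{\phi^N}(k,t)|^2$ is proportional to $|k|^{-2}|\widehat{\rho^N}(k,t)|^2$, whose integral equals $\int\rho^N\phi^N$. Again the finite-mass truncation is what guarantees that $\rho^N(\cdot,t)\in L^1\cap L^2$ so that these manipulations make sense.

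The main obstacle is not any hard estimate but rather the bookkeeping at infinity: without the cutoff $N$ neither the potential energy nor $\|E^N\|_{L^2}^2$ need be finite, and the proposition would fail, so one must be explicit that the identity is asserted only for the truncated (finite-mass) system — after which the statement is essentially a one-line integration by parts.
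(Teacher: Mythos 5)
Your proof is correct and follows essentially the same route as the paper: introduce the potential $\Phi^N$, write the double integral as $\int \Phi^N\,\mathrm{div}\,E^N$, integrate by parts, and kill the boundary term using the compact support of $\rho^N(\cdot,t)$ in the partial dynamics, which gives $\Phi^N=O(|x|^{-1})$ and $E^N=O(|x|^{-2})$ so the flux through a sphere of radius $R$ is $O(R^{-1})$. The Fourier/Plancherel alternative you mention is a valid variant but is not what the paper does.
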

and hence the potential energy is positive, in spite of the fact that the spatial density $\rho^N$ is not definite in sign.
\begin{proof}
To show this we put
$$
\Phi^N(x,t) = \int dy \, 
\rho^N ( y,t)  \frac{1}{| x- y|},
$$
and since $\rho^N(x,t) = \textnormal{div} E^N(x,t)$, we have
$$
 \int d x \int dy \, \rho^N(x,t) \, 
\rho^N ( y,t)  \frac{1}{| x- y|} =
\int dx \, \Phi^N (x,t) \, \textnormal{div} E^N(x,t) =
$$
$$
\int dx \big[ \textnormal{div}(E^N(x,t) \Phi^N(x,t))  - E^N(x,t) \cdot \nabla \Phi^N(x,t)   \big] =
$$
$$
\int dx \, \textnormal{div}(E^N(x,t) \Phi^N(x,t)) + \int dx \, |E^N(x,t)|^2.
$$
The first integral vanishes, as it can be seen performing the integral over a ball of radius $R$, and taking
the limit $R\to \infty$. Indeed, for the Gauss theorem  it is
\begin{equation}
\int dx \, \textnormal{div}(E^N(x,t) \Phi^N(x,t))=\int (E^N(x,t)  \Phi^N(x,t)) \cdot \hat n \, dS
\label{border}
\end{equation}
where the right-hand integral is taken over the surface of the ball. 
To show that the previous integral vanishes we have to obtain the behavior of the electric field for large $x$.
Denoting by $\Gamma_t$ the support of $\rho_t^N$ at time $t$, we have
\begin{equation}
|E^N(x,t)| \leq \int_{\Gamma_t}  \frac{|\rho^N(y,t)|}{|x-y|^2} dy \leq \int_{A_t}  \frac{|\rho^N(y,t)|}{|x-y|^2} dy
+ \int_{B_t}  \frac{|\rho^N(y,t)|}{|x-y|^2} dy
\end{equation}
where $A_t = \Gamma_t  \cap \{ y: |x-y|\leq \frac{|x|}{2}\}$  and  $B_t = \Gamma_t  \cap \{ y: |x-y|> \frac{|x|}{2}\}$.
By definition  $\Gamma_0 = \{ x:  |x|\leq N^\beta  \}$, whereas, as a result of Corollary $2.8$  of \cite{CCM18}, it is
${\mathcal{V}}^N(t) \leq C N$, which implies $\Gamma_t \subset \{ x:  |x|\leq N^\beta + C N \}$.
\noindent Then,  since we are interested to estimate $|E^N(x,t)|$ at large $|x|$ (in order to perform the limit $R\to \infty$
in (\ref{border})), we consider $|x|\gg N^\beta + C N$, which brings to $A_t = \emptyset$ and consequently
\begin{equation}
\begin{split}
|E^N(x,t)| \leq  & \, \frac{4}{|x|^2} \int_{B_t}  |\rho^N(y,t)| \, dy
\leq \frac{4}{|x|^2} \int_{\Gamma_t}  |\rho^N(y,t)| \, dy \\
\leq & \, \frac{4}{|x|^2} \int_{\Gamma_t} \sum_i^n |\rho_i^N(y,t)| \, dy = \frac{4}{|x|^2} \int_{\Gamma_0} \sum_i^n |\rho_i^N(y,0)| \, dy  \\
\leq &  \, \frac{C }{|x|^2}  N^{\beta (3-\alpha)}
\end{split}
\end{equation}
by the spatial cutoff $N^\beta$, the conservation of mass (that is, the $L^1$ norm of each $\rho_i^N$), and the
initial decay (\ref{dec}). The same holds for $\Phi^N(x,t))$,
\begin{equation}
|\Phi^N(x,t)| \leq \int |\rho^N(y,t)| \frac{1}{|x-y|} dy \leq  \frac{C}{|x|} N^{\beta(3-\alpha)}.
\end{equation}
Therefore, since the surface of integration grows as $R^2$,  the surface integral in (\ref{border}) goes to zero
for $R\to \infty$, choosing $R>N^{2\beta(3-\alpha)}$.     \end{proof}

\begin{proposition}
\begin{equation}
 \mathcal{E}^N(t) \leq C N^{3\beta}.
 \end{equation} 
 \label{prop2}
\end{proposition}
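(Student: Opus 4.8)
The plan is to exploit the conservation of the total energy, $\mathcal{E}^N(t)=\mathcal{E}^N(0)$, so that it suffices to bound $\mathcal{E}^N(0)$, which is entirely explicit in terms of the truncated datum (\ref{B0}). Write $\mathcal{E}^N(0)=K^N+U^N$, with $K^N$ the kinetic and $U^N$ the potential part. For $K^N$ I would simply insert (\ref{dec}) and integrate out the Gaussian in $v$ (whose $|v|^2$-moment is a finite constant), so that $K^N\le C\int_{\{|x|\le N^\beta\}}(1+|x|)^{-\alpha}\,dx\le C\,N^{\beta(3-\alpha)}\le C\,N^{3\beta}$ for $1<\alpha<3$ (and $K^N\le C$ for $\alpha>3$); so the kinetic part is harmless.

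For the potential part $U^N$ (recall from Proposition~1 that it equals $\tfrac12\int|E^N|^2\ge0$) I would bound it from above by passing to absolute values: with $\Phi^N$ as in the proof of Proposition~1,
$$
U^N=\tfrac12\!\int\! dx\,\rho^N(x,0)\,\Phi^N(x,0)\le\tfrac12\,\|\rho^N(\cdot,0)\|_{L^1}\,\sup_{x}\int\frac{|\rho^N(y,0)|}{|x-y|}\,dy .
$$
As in the proof of Proposition~1, integrating out the Maxwellian gives $|\rho^N(y,0)|\le C\,(1+|y|)^{-\alpha}\chi_{\{|y|\le N^\beta\}}(y)$, whence $\|\rho^N(\cdot,0)\|_{L^1}\le C\,N^{\beta(3-\alpha)}$. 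It remains to estimate $\sup_x\int_{\{|y|\le N^\beta\}}(1+|y|)^{-\alpha}|x-y|^{-1}\,dy$. I would split the $y$-integral over $\{|x-y|\le1\}$, which contributes $\le C(1+|x|)^{-\alpha}\le C$ since there $1+|y|\ge c(1+|x|)$, and over $\{|x-y|>1,\ |y|\le N^\beta\}$, which I would decompose into dyadic shells $\{2^k\le|x-y|<2^{k+1}\}$ for $0\le k\lesssim\log N^\beta$. On each shell one uses $|x-y|^{-1}\le2^{-k}$ together with the elementary bound $\int_{B(x,r)}(1+|y|)^{-\alpha}\,dy\le C\min\{r^3(1+|x|)^{-\alpha},\,r^{3-\alpha}\}$; summing the resulting geometric-type series yields $\sup_{x}\int_{\{|y|\le N^\beta\}}(1+|y|)^{-\alpha}|x-y|^{-1}\,dy\le C\,N^{\beta(2-\alpha)}$ for $1<\alpha<2$, and $\le C\log N$ for $\alpha=2$, $\le C$ for $2<\alpha<3$. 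Hence $U^N\le C\,N^{\beta(5-2\alpha)}$ when $1<\alpha<2$ (and $U^N\le C\,N^{\beta(3-\alpha)}\log N$ otherwise); since $5-2\alpha<3\iff\alpha>1$ and $3-\alpha<3$, in every case $U^N\le C\,N^{3\beta}$. Adding the kinetic bound gives $\mathcal{E}^N(t)=\mathcal{E}^N(0)=K^N+U^N\le C\,N^{3\beta}$.

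The \emph{main obstacle} is precisely the $L^\infty$ bound on the Coulomb potential $\int|\rho^N(y,0)||x-y|^{-1}\,dy$: a crude estimate that merely drops $|x-y|^{-1}$ would give only $C\,N^{\beta(3-\alpha)}$, hence $U^N\le C\,N^{2\beta(3-\alpha)}$, which exceeds $C\,N^{3\beta}$ exactly in the delicate range $1<\alpha<3/2$. One therefore has to use the Coulomb kernel and the spatial decay of $\rho^N$ together --- via the dyadic decomposition above, or equivalently via a Hardy--Littlewood--Sobolev inequality, which gives $U^N\le C\,\|(1+|\cdot|)^{-\alpha}\chi_{\{|\cdot|\le N^\beta\}}\|_{L^{6/5}(\R^3)}^2$ and the same growth in $N$. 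Everything else is a direct computation with the explicit datum (\ref{B0}) and the bound (\ref{dec}).
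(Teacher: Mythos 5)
Your argument is correct, and it rests on the same single essential idea as the paper's proof: by conservation, $\mathcal{E}^N(t)=\mathcal{E}^N(0)$, so the whole proposition reduces to an explicit computation on the truncated datum (\ref{B0}); your treatment of the kinetic part, giving $CN^{\beta(3-\alpha)}$, is identical to the paper's. Where you genuinely diverge is the potential term. The paper keeps the representation $\tfrac12\int|E^N(x,0)|^2dx$ coming from Proposition 1, proves the pointwise bounds $|E^N(x,0)|\le C$ on $\{|x|\le 2N^\beta\}$ (H\"older against the kernel $|x-y|^{-2}$, split at $|x-y|=1$) and $|E^N(x,0)|\le C|x|^{-2}N^{\beta(3-\alpha)}$ outside, and then integrates the square; the dominant contribution is $C\cdot N^{3\beta}$, the volume of the inner ball, which is exactly where the exponent $3\beta$ in the statement originates. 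You instead estimate $\tfrac12\int\rho^N\Phi^N$ by $\|\rho^N(\cdot,0)\|_{L^1}\,\|\Phi^N(\cdot,0)\|_{L^\infty}$ and prove the sharper bound $\|\Phi^N(\cdot,0)\|_{L^\infty}\le CN^{\beta(2-\alpha)}$ (for $1<\alpha<2$) by dyadic shells, or equivalently by Hardy--Littlewood--Sobolev; this yields $U^N\le CN^{\beta(5-2\alpha)}$, which is in fact stronger than the paper's $CN^{3\beta}$ for every $\alpha>1$, though the gain is irrelevant downstream since only $3\beta<\tfrac{14}{15}$ is used. Your diagnosis of the obstacle is accurate: discarding the Coulomb kernel gives $N^{2\beta(3-\alpha)}$, which overshoots $N^{3\beta}$ precisely for $\alpha<\tfrac32$, so the kernel and the spatial decay must be used jointly --- the paper does this through the field, you through the potential; both are legitimate. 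One minor remark: your $L^1$--$L^\infty$ pairing actually bounds $|U^N|$, which is all that is needed here; the positivity of the potential energy established in Proposition 1 is not required for this proposition itself, but it is what later allows the bound on $\mathcal{E}^N(t)$ to be converted into a bound on the kinetic energy at positive times.
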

\begin{proof}
First of all we have
\begin{equation}
\begin{split}
\mathcal{E}^N(t) =& \, \frac12 \int d x \int d v \, |v|^2  f^N( x, v,t) 
 + \frac12 \int dx \, |E^N(x, t)|^2 \\
  =&\,  \mathcal{E}^N(0)\leq C N^{\beta (3-\alpha)} + \frac12 \int dx \, |E^N(x, 0)|^2
   \end{split}
 \label{en_N}
 \end{equation}
as we obtain again by using in the kinetic energy the decreasing property of the initial density (\ref{dec}) and the spatial cutoff $N^\beta$
introduced in (\ref{B0}). 

The potential energy in $\mathcal{E}^N(0)$ is  bounded by 
\begin{equation}
\int_{x\in {\mathbb{R}}^3} dx \, |E^N(x, 0)|^2 \leq \int_{x\in {\mathbb{R}}^3} dx \left[ \int_{|y|\leq N^\beta} dy \, \frac{C}{(1+|y|)^\alpha} \frac{1}{|x-y|^2} \right]^2,
\label{2.11}
\end{equation}
and 
\begin{equation}
\begin{split}
 \int_{|y|\leq N^\beta} dy \, \frac{C}{(1+|y|)^\alpha} \frac{1}{|x-y|^2} &\leq 
 \frac{2}{|x|^2} \, \chi_{\{|x|\geq 2 N^\beta\}} (x)  \int_{|y|\leq N^\beta} dy \, \frac{C}{(1+|y|)^\alpha} \\
 &+  \chi_{\{|x|\leq 2 N^\beta\}} (x)  \int_{|y|\leq N^\beta} dy \, \frac{C}{(1+|y|)^\alpha} \frac{1}{|x-y|^2}.
  \end{split}
  \label{scomp}
  \end{equation}
The first term on the right hand side of (\ref{scomp}) is bounded by  
\begin{equation}
\frac{C}{|x|^2} N^{\beta(3-\alpha)}   \, \chi_{\{|x|\geq 2 N^\beta\}} (x),
\end{equation}
while for
the second one we proceed as follows,  
\begin{equation}
\begin{split}
&\int_{|y|\leq N^\beta} dy \, \frac{C}{(1+|y|)^\alpha} \frac{1}{|x-y|^2} \leq
\int_{\mathbb{R}^3} dy \, \frac{C}{(1+|y|)^\alpha} \frac{1}{|x-y|^2} \leq \\
&\int_{|x-y|\leq 1} dy \, \frac{C}{(1+|y|)^\alpha} \frac{1}{|x-y|^2} +
\int_{|x-y|>1} dy \, \frac{C}{(1+|y|)^\alpha} \frac{1}{|x-y|^2}
\end{split}
\end{equation}
and, by H\"older inequality,
\begin{equation}
\begin{split}
&\int_{|x-y|\leq 1} dy \, \frac{C}{(1+|y|)^\alpha} \frac{1}{|x-y|^2} \leq  \\
&\left[   \int_{|x-y|\leq 1} dy \left( \frac{C}{(1+|y|)^\alpha} \right)^p  \right]^{1/p}
\left[  \int_{|x-y|\leq 1} dy \left(    \frac{1}{|x-y|^2} \right)^q     \right]^{1/q} \leq const
\end{split}
\end{equation}
by choosing $q<\frac32$ and $p>3$, whereas, using again H\"older inequality with different exponents,
\begin{equation}
\begin{split}
&\int_{|x-y|> 1} dy \, \frac{C}{(1+|y|)^\alpha} \frac{1}{|x-y|^2} \leq  \\
&\left[   \int_{|x-y|>1} dy \left( \frac{C}{(1+|y|)^\alpha} \right)^p  \right]^{1/p}
\left[  \int_{|x-y|>1} dy \left(    \frac{1}{|x-y|^2} \right)^q     \right]^{1/q} \leq \\
&\left[   \int_{\mathbb{R}^3} dy \left( \frac{C}{(1+|y|)^\alpha} \right)^p  \right]^{1/p}
\left[  \int_{|x-y|>1} dy \left(    \frac{1}{|x-y|^2} \right)^q     \right]^{1/q}   \leq const
\end{split}
\end{equation}
by choosing $q>\frac32$ and  $p< 3$, but in such a way that  $\alpha  p >3$
(that is possible since $\alpha >1$).

\noindent Coming back to (\ref{2.11}) we have obtained 
\begin{equation}
|E^N(x,0)| \leq 
 \frac{C}{|x|^2} N^{\beta(3-\alpha)}   \, \chi_{\{|x|\geq 2 N^\beta\}} (x)
 +C \, \chi_{\{|x|\leq 2 N^\beta\}} (x),
\end{equation}
and for the corresponding integral of $|E^N(x,0)|^2$,  
\begin{equation}
\begin{split}
&\int_{x\in {\mathbb{R}}^3} dx \, |E^N(x, 0)|^2 \leq   C N^{2\beta (3-\alpha)}    \frac{1}{N^{\beta (1-\nu)}} \int_{\mathbb{R}^3} dx \, \frac{1}{|x|^{3+\nu}} \, 
 +  C N^{3\beta} \leq  \\
& C N^{5\beta -2 \beta\alpha  + \beta\nu} +  C N^{3 \beta} \leq C N^{3 \beta},
\label{2.16}
\end{split}
 \end{equation}
 taking $0<\nu<\min\{1, 2\alpha-2\}$.
 Inserting the last estimate in (\ref{en_N}) we obtain Proposition \ref{prop2}.     \end{proof}

 We remark again that the positivity of the potential energy  is an essential fact in the proof of  Theorem \ref{3}, 
in light of our present technique for many-species plasmas with infinite total mass  (pointed out also in \cite{CCM17}).
In fact in previous papers (as \cite{CCM, CCM1, CCM15, CCM16,  CCM17, CCM18})  the energy of a region of size 
$$
R^N(t) =1 +\int_0^t {\mathcal{V}}^N(s) \, ds
$$
 (the maximum displacement),
was defined as
\begin{equation}
\begin{split}
 &W^N( \mu,R^N(t),t)=\frac12 \int d x \int d v \ \varphi^{\mu,R^N(t)}( x) |v|^2 f^N( x, v,t)+\\
 &\frac12\int d x \ \varphi^{\mu,R^N(t)}( x)\rho^N( x,t)\int dy \ 
\frac{ \rho^N( y,t)}{  | x- y|},
\end{split} \label{e}
\end{equation}
where $\varphi$ is a mollifier function, that is,
for any vector $\mu\in \mathbb{R}^3$ and any $R>0$, it is defined as
\begin{equation}
\varphi^{\mu,R}( x)=\varphi\Bigg(\frac{| x-\mu|}{R}\Bigg) \label{a}
\end{equation}
with $\varphi$ a smooth function such that
\begin{equation}
\varphi(r)=1 \ \ \hbox{if} \ \ r\in[0,1] \label{b}
\end{equation}
\begin{equation}
\varphi(r)=0 \ \ \hbox{if} \ \ r\in[2,+\infty) \label{c}
\end{equation}
\begin{equation}
-2\leq \varphi'(r)\leq 0.\label{d}
\end{equation}
Moreover
$$
Q^N(R^N(t), t) = \sup_{\mu\in{\mathbb{R}}^3} W^N( \mu,R^N(t),t),
$$
and it turned out that $Q^N(R^N(t), t)$  was controlled by the energy of a region of larger size at time zero,
property which  cannot be employed  in the present situation (since its proof fails). We are forced to use conservation of energy, as exposed above.

Anyway the estimates which bring to the proof of the result contained in (\cite{CCM18}), which concerns the dynamics of a 
single-species plasma in a frame analogous to the present one, are fulfilled also here, in particular the needed bound
for the energy, contained in Corollary 2.8 of \cite{CCM18}, 
$$
Q^N(R^N(t), t) \leq C N^{1-\epsilon}, \qquad \qquad \epsilon > \frac{1}{15},
$$
is here achieved by taking $0<3\beta<\frac{14}{15}$.
Hence the rest of the proof of Theorem \ref{3} proceed now in complete analogy to that in \cite{CCM18}.

\section{Comments in presence of a magnetic shield}

In this section we want to mention an  improvement of a previous result contained in \cite{CCM17}, in which   it was stated an existence and uniqueness result for the solution to the Vlasov-Poisson system in a region of the physical space
consisting in the exterior of a torus, idealized as a spaceship to be protected from the solar wind (a non-relativistic plasma).
The protection of the torus $\Gamma$, parametrized as
\begin{equation}
\label{coord.}
\begin{cases}
\dsp  x_1= (R+r\cos \alpha) \cos \theta\\
\dsp x_2 =  (R+r\cos \alpha) \sin \theta \\
\dsp x_3 = r \sin \alpha \\
\dsp 0 \leq \alpha < 2 \pi, \quad  0 \leq \theta < 2 \pi \\
\dsp 0\leq r\leq r_0
 \end{cases}
\end{equation}
with $R>r_0>0$, was obtained by a suitable magnetic field diverging on the border of the torus,  $\partial\Gamma$
(a {\textit{magnetic shield}}).
The result, stated in \cite{CCM17} for a plasma with many species of different charge signs and finite total mass, can be  generalized, on the basis of the present technique,
 to the infinite mass case, as follows.
 
\noindent We denote by $\Lambda_i$ the spatial support of $f_{i,0}(x,v)$ for any $i=1,\dots, n$,     and 
$\Gamma^c=\mathbb{R}^3\setminus\Gamma.$

\begin{theorem}
Let us fix an arbitrary positive time T. Consider the initial data $f_{i,0} \in L^\infty$ such that $\Lambda_i\subset \Gamma^c\setminus \partial\Gamma$, with a distance between $\Lambda_i$ and $\Gamma$ greater than
$d_0>0$.
 Let  $f_{i,0}$ also  satisfy the following hypotheses:
\begin{equation}
0\leq f_{i,0}(x,v)\leq C_4\, e^{- \lambda |v|^{q}}\frac{1}{(1+|x|)^{\alpha}}, \qquad 
q>\frac{18}{7}
\label{Ga1}
 \end{equation}
 with $\alpha>1$,  being $\lambda$  and   $C_4$   positive constants. Then $\forall (x,v)$   there exists a solution to equations (\ref{ch}) on $[0, T]$
  such that $X(t)\in \Gamma^c\setminus \partial\Gamma.$

Moreover there exist   positive constants $C_5$ and ${\lambda}'$ such that  
 \begin{equation}
 0\leq f_i(x,v,t)\leq C_5\, e^{- \lambda' |v|^{q}}\frac{1}{(1+|x|)^{\alpha}}.
 \label{dec1}
 \end{equation}
This solution is unique in the class of those satisfying (\ref{dec1}).
\label{th_02}
\end{theorem}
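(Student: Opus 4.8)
The plan is to combine the magnetic–shield machinery of \cite{CCM17} with the energy estimates established in Section 2 of the present paper, exploiting once more the positivity of the potential energy. First I would introduce a partial dynamics analogous to that of Section 2, truncating the initial data as in \eqref{B0}, and recall from \cite{CCM17} that, for the truncated (compactly supported, finite mass) data, the confining magnetic shield produces a global-in-time solution on $[0,T]$ whose characteristics never reach $\partial\Gamma$; indeed, the divergence of the magnetic field on $\partial\Gamma$ forces any particle approaching the torus to be deflected, so each $X_i^N(t)$ stays in $\Gamma^c\setminus\partial\Gamma$ at a positive distance from $\Gamma$. The charge conservation and Liouville's theorem then hold verbatim for the partial dynamics, and $\|f_i^N(t)\|_{L^\infty}=\|f_{i,0}^N\|_{L^\infty}$.

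Next I would transfer the a~priori estimates. The key point is that Proposition 1 and Proposition \ref{prop2} are insensitive to the presence of the magnetic field: the identity $\int dx\,|E^N(x,t)|^2 = \int dx\,dy\,\rho^N(x,t)\rho^N(y,t)/|x-y|$ and the resulting positivity of the potential energy depend only on the Coulombic structure of $E^N$ and on the compact support of $\rho_t^N$, which still holds because the support of the species, initially contained in the $\Lambda_i$, remains bounded (the magnetic shield only confines, it does not spread the support beyond the ballistic estimate $|x|\le N^\beta + CN$). Hence $\mathcal{E}^N(t)=\mathcal{E}^N(0)\le C N^{3\beta}$ as before. The stronger velocity decay $q>18/7$ in \eqref{Ga1}, as opposed to the $|v|^2$ Maxwellian of Theorem \ref{3}, is exactly what is needed to push through the moment bounds in the magnetic-shield geometry, where the field near $\partial\Gamma$ is large; this is the analogue of the exponents appearing in \cite{CCM17}. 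With the energy bound in hand, the maximal velocity $\mathcal{V}^N(T)$ is controlled through the time-averaged field estimate $\int_0^t |E(X_i^N(s),s)|\,ds \le [\mathcal{V}^N(T)]^\tau$ with $\tau<2/3$, and the iterative scheme of \cite{CCM18} applies to yield convergence of the partial dynamics as $N\to\infty$.

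Finally I would pass to the limit. The pointwise convergence $X_i^N(t)\to X_i(t)$, $V_i^N(t)\to V_i(t)$ and $f_i^N\to f_i$ follows from the iterative contraction argument, exactly as in the proof of Theorem \ref{3}; the decay estimate \eqref{dec1} is inherited in the limit from the propagation of the bound $0\le f_i^N(x,v,t)\le C\,e^{-\lambda'|v|^q}(1+|x|)^{-\alpha}$, which in turn comes from transporting \eqref{Ga1} along characteristics using the velocity control. The confinement $X(t)\in\Gamma^c\setminus\partial\Gamma$ is stable under the limit because the distance of $X_i^N(t)$ from $\Gamma$ is bounded below uniformly in $N$ by a constant depending only on $d_0$, $T$ and the data — this uniform lower bound is the place where the magnetic shield estimates of \cite{CCM17} are invoked most crucially. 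Uniqueness in the class \eqref{dec1} is obtained by the standard Gronwall-type argument on the difference of two solutions, using that the decay \eqref{dec1} makes the electric field and its relevant norms finite and comparable.

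The main obstacle I expect is verifying that the uniform-in-$N$ lower bound on $\mathrm{dist}(X_i^N(t),\Gamma)$ survives together with the energy growth $\mathcal{E}^N(t)\le C N^{3\beta}$: one must check that the ballistic spreading of the support, $|x|\le N^\beta+CN$, is compatible with the magnetic confinement estimates of \cite{CCM17}, i.e.\ that the diverging magnetic field is strong enough to deflect even the fastest particles of the partial dynamics before they hit $\partial\Gamma$, and that the constant $C_3$ in \eqref{mv} and the cutoff exponent $\beta$ (again with $0<3\beta<14/15$) can be chosen consistently with the requirement $q>18/7$. Once this compatibility is established, the remaining steps are routine adaptations of \cite{CCM18} and \cite{CCM17}, and for this reason the detailed calculations are omitted.
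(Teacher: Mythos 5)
Your proposal follows essentially the same route the paper indicates: the authors give no detailed proof of Theorem \ref{th_02}, stating only that it follows from ``a straightforward combination of the techniques in \cite{CCM17} with the present ones,'' i.e.\ exactly the combination you describe of the magnetic-shield confinement estimates with the partial dynamics, the positivity of the potential energy, and the bound $\mathcal{E}^N(t)\le C N^{3\beta}$ from Section 2. Your sketch is consistent with that strategy (including the role of the super-Gaussian decay $q>18/7$, which the paper attributes to the curvature of the magnetic lines on $\partial\Gamma$), and it omits the detailed verifications at the same level the paper does.
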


\noindent We do not discuss the proof, which can be obtained by a straightforward combination of the techniques in \cite{CCM17}
with the present ones.
We remark that the assumption on the  super-gaussian decay of the velocities (\ref{Ga1})  is related to the fact that the magnetic lines are not straight lines (due to the curvature of $\partial\Gamma$).
When studying unbounded regions (for instance a cylinder) with straight magnetic lines, we can obtain a stronger result, including  initial data with a gaussian (Maxwell-Boltzmann) decay in the initial velocities.

\bigskip

\bigskip

\bigskip

\bigskip

\textbf{Acknowledgments}.
Work performed under the auspices of 
GNFM-INDAM and the Italian Ministry of the University (MIUR).  

\bigskip

\bigskip

\end{document}